\DeclareFontFamily{OT1}{pzc}{}
\DeclareFontShape{OT1}{pzc}{m}{it}{<-> s * [1.10] pzcmi7t}{}
\DeclareMathAlphabet{\mathpzc}{OT1}{pzc}{m}{it}
\newtheorem{theorem}{Theorem}[section]
\newtheorem{lemma}[theorem]{Lemma}
\newtheorem{corollary}[theorem]{Corollary}
\newtheorem{remark}[theorem]{Remark}
\providecommand{\R}{\mathbb{R}}
\providecommand{\SO}{\mathbf{SO}}
\providecommand{\SL}{\mathbf{SL}}
\providecommand{\GL}{\mathbf{GL}}
\providecommand{\SE}{\mathbf{SE}}
\providecommand{\grpG}{\mathbf{G}}
\providecommand{\gothgl}{\mathfrak{gl}}
\providecommand{\gothg}{\mathfrak{g}}
\providecommand{\Sph}{\mathrm{S}}
\providecommand{\calJ}{\mathcal{J}}
\providecommand{\calM}{\mathcal{M}}
\providecommand{\calT}{\mathcal{T}}
\providecommand{\cset}[2]{\left\{ {#1} \;\middle\vert\; {#2} \right\}}
\providecommand{\Sym}{\mathbb{S}} % symmetric matrix $\Sym(n)$
\providecommand{\tT}{\mathrm{T}} % tangent objects eg $\tT \calM$
\DeclareMathOperator{\tr}{tr}
\DeclareMathOperator{\Ad}{Ad}
\DeclareMathOperator{\image}{im}
\providecommand{\td}{\mathrm{d}}
\providecommand{\tD}{\mathrm{D}}
\providecommand{\ddt}{\frac{\td}{\td t}}
\providecommand{\Hess}{\mathrm{Hess}}
\providecommand{\mr}[1]{\mathring{#1}} % reference element.
\providecommand{\ob}[1]{\overline{#1}} % homogeneous vector
\providecommand{\scirc}{%
    \hbox{\fontfamily{\rmdefault}\fontsize{0.4\dimexpr(\f@size pt)}{0}\selectfont{\raisebox{-0.52ex}[0ex][-0.52ex]{$\circ$}}}}
\providecommand{\ucirc}{%
    \hbox{\fontfamily{\rmdefault}\fontsize{0.4\dimexpr(\f@size pt)}{0}\selectfont{\raisebox{0.0ex}[0ex][-0.52ex]{$\circ$}}}}
\mathchardef\mhyphen="2D
\providecommand{\etal}{\textit{et al.~}}
\newcommand{\QED}{$\blacksquare$}
\newenvironment{proof}{\noindent\hspace{2em}{\itshape Proof: }}{\hspace*{\fill}~\QED\par\endtrivlist\unskip}
\newenvironment{proofx}[1]{\noindent\hspace{2em}{\itshape Proof of #1: }}{\hspace*{\fill}~\QED\par\endtrivlist\unskip}
\newcommand{\at}[2]{\left. #1 \right|_{#2}}
\newcommand{\bbH}{\mathbb{H}}
\newcommand{\JGradient}{\eta}
\newcommand{\actionMat}{\Upsilon}
\newcommand{\actionMatT}{\ob{\Upsilon}}
\newcommand{\change}[1]{{\color{blue} #1}}
\newcommand{\robchange}[1]{{\color{teal} #1}}
\renewcommand{\change}[1]{#1}
\renewcommand{\robchange}[1]{#1}
\title{Global Minimum Energy State Estimation for Embedded Nonlinear Systems with Symmetry}
\author{
    \href{https://orcid.org/0000-0003-4391-7014}{\includegraphics[scale=0.06]{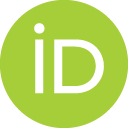}\hspace{1mm}
Pieter van Goor}
\\
    Robotics and Mechatronics Group \\
	University of Twente \\
    7522 NB Enschede, Netherlands \\
    \texttt{p.c.h.vangoor@utwente.nl} \\
    %%%%%%%%%% - comment removed - %%%%%%%%%%
	\And	\href{https://orcid.org/0000-0002-7803-2868}{\includegraphics[scale=0.06]{orcid.png}\hspace{1mm}
    Robert Mahony}
\\
    Systems Theory and Robotics Group \\
	Australian National University \\
    ACT, 2601, Australia \\
	\texttt{robert.mahony@anu.edu.au} \\
}
\begin{document}

\maketitle
\thispagestyle{empty}
\pagestyle{empty}

\begin{abstract}
Choosing a nonlinear state estimator for an application often involves a trade-off between local optimality (such as provided by an extended Kalman filter) and (almost-/semi-) global asymptotic stability (such as provided by a constructive observer design based on Lyapunov principles).
This paper proposes a filter design methodology that is both global and optimal for a class of nonlinear systems.
In particular, systems for which there is an embedding of the state-manifold into Euclidean space for which the measurement function is linear in the embedding space and for which there is a synchronous error construction.
A novel observer is derived using the minimum energy filter design paradigm and exploiting the embedding coordinates to solve for the \emph{globally optimal solution exactly}.
The observer is demonstrated through an application to the problem of unit quaternion attitude estimation, by embedding the 3-dimensional nonlinear system into a 4-dimensional Euclidean space.
Simulation results demonstrate that the state estimate remains optimal for all time and converges even with a very large initial error.
\end{abstract}

%%%%%%%%%% - comment removed - %%%%%%%%%%
\section{Introduction}
%%%%%%%%%% - comment removed - %%%%%%%%%%

State estimation for nonlinear systems is a challenging problem and has many applications across robotics, aerospace, computer vision, etc.
The extended Kalman filter (EKF) is the de facto standard solution, and while it and its relatives (UKF~\cite{2004_julier_UnscentedFilteringNonlinear}, IEKF~\cite{2018_barrau_InvariantKalmanFiltering}, EqF~\cite{2023_vangoor_EquivariantFilterEqF}, etc.) can provide a locally optimal state estimate, no guarantees can be made about their performance beyond a local domain \cite{1999_reif_StochasticStabilityDiscretetime}.
In contrast, constructive nonlinear observer designs that exploit Lie theory and differential geometry have demonstrated global and almost-global asymptotic stability, but do not provide the same local optimality as an EKF-type approach
\cite{2008_mahony_NonlinearComplementaryFilters,2007_baldwin_ComplementaryFilterDesign,2010_lageman_GradientLikeObserversInvariant,2012_mahony_NonlinearComplementaryFilters,2021_berkane_NonlinearNavigationObserver,2018_zlotnik_GradientbasedObserverSimultaneous}.
\robchange{
A key building block in constructive nonlinear observer design is the concept of synchrony: the property that the observer-system error dynamics are linearly dependent on the observer correction term \cite{2010_lageman_GradientLikeObserversInvariant}; that is, there are no exogenous or drift terms in the error dynamics.
This property makes designing the correction term for a Lyapunov observer design straightforward \cite{2023_vangoor_SynchronousObserverDesign}.}
Synchrony is also closely related \cite{2023_vangoor_SynchronousObserverDesign} to the group-affine structure that plays a fundamental role in the performance of the Invariant EKF \cite{2018_barrau_InvariantKalmanFiltering}.
The stochastic motivation for an EKF (or IEKF) design, however, does not generalise well to global nonlinear systems analysis.

Minimum energy filtering provides a best-of-both-worlds deterministic view of the noisy state estimation problem \cite{1968_mortensen_MaximumlikelihoodRecursiveNonlinear}
that specialises to the (optimal) Kalman filter on linear systems \cite{1968_mortensen_MaximumlikelihoodRecursiveNonlinear} but can be formulated globally on Lie-groups and homogeneous spaces \cite{2016_saccon_SecondOrderOptimalMinimumEnergyFilters}.
Although the minimum energy framework has been used to derive EKF type filters
\cite{2011_zamani_MinimumenergyFilteringUnit,2011_zamani_NearOptimalDeterministicFiltering,2013_zamani_MinimumEnergyFilteringAttitude,2016_saccon_SecondOrderOptimalMinimumEnergyFilters,2023_rigo_SecondorderoptimalFilteringSE}
by taking second order approximations of the value function evolution, to the authors' knowledge, no optimal exact solution has been found prior to this work.
\change{There is also a significant body of work in geometric optimal control on Lie-groups and homogeneous spaces \cite{1997_jurdjevic_GeometricControlTheory,2015_bloch_NonholonomicMechanicsControl,2005_bullo_GeometricControlMechanical} that considers the optimal control problem, but to the authors understanding, does not treat the filtering problem.}

In this paper, we consider the state estimation problem for a class of systems posed on homogeneous spaces that can be algebraically embedded into an ambient Euclidean space.
In addition, we require that the systems have dynamics and measurement processes that can be written as linear in the embedding coordinates.
Note that while the dynamics and measurement processes may be written algebraically in a linear form, the system state is constrained to lie on a nonlinear embedded manifold and the actual system is therefore not linear in a classical sense.
The approach has been considered in the past:
Choukroun \etal \cite{2006_choukroun_NovelQuaternionKalman} proposed to estimate the quaternion attitude of a vehicle by treating the system as a linear system in $\R^4$ and then renormalising the state estimate to recover a unit quaternion.
Aguiar and Hespanha \cite{2006_aguiar_MinimumenergyStateEstimation} provided a rigid-body pose estimation algorithm that relied on an embedding of the orientation matrix into $\R^{3\times 3} \simeq \R^9$, and required a normalisation step to recover a valid rotation matrix.
In both of these examples, optimality of the solution was lost in the projection step that was needed to ensure the state estimate remained on the state-manifold.
In contrast, in this paper we exploit Lie-group symmetry to ensure satisfaction of the state constraint while exploiting linearity of the embedding system structure to obtain an exact optimal solution.

Our contributions are as follows:
\begin{itemize}
    \item We provide an exact optimal observer solution to a \emph{nonlinear} minimum energy filter problem for systems on homogeneous spaces that can be expressed as algebraically linear in embedding coordinates.
    \item We demonstrate our method in an application to optimal quaternion attitude estimation.
    To the authors understanding, this is the first globally optimal (in the sense of a particular minimum energy cost criteria) attitude filter.
\end{itemize}

The key advantage of the proposed state estimator is that it is both optimal and global.
EKF approaches are optimal but only in a local domain, while deterministic approaches behave well globally but cannot provide (even local) optimality in general.
%%%%%%%%%% - comment removed - %%%%%%%%%%
%%%%%%%%%% - comment removed - %%%%%%%%%%
%%%%%%%%%% - comment removed - %%%%%%%%%%
%%%%%%%%%% - comment removed - %%%%%%%%%%

%%%%%%%%%% - comment removed - %%%%%%%%%%
\section{Preliminaries}
%%%%%%%%%% - comment removed - %%%%%%%%%%

The inner product and norm on $\R^{m\times n}$ are defined by
\begin{align*}
    \langle A, B \rangle &:= \tr(A^\top B), &
    \vert A \vert &:= \sqrt{\langle A, A \rangle},
\end{align*}
for all $A,B \in \R^{m\times n}$.
The set of symmetric positive-definite $m\times m$ matrices is denoted $\Sym_+(m)$.
For any $P \in \Sym_+(m)$, the weighted vector norm is defined by
\begin{align*}
    \vert v \vert_P &:= \sqrt{\langle P v, v \rangle},
\end{align*}
for every $v \in \R^m$.

For a differentiable function $V : \R^m \to \R^n$, we denote the differential of $V$ at $x$ by $\tD V(x) \in \R^{n\times m}$ with
\begin{align*}
    \tD V(x) u &:= \at{\ddt}{t=0} V(x + tu).
\end{align*}
For functions of multiple variables, a subscript may be attached to $\tD$ to indicate the variable being differentiated.
Square brackets are used, $\tD V(x) [u]$, when the matrix structure of $\tD V(x)$ multiplying $u \in \R^m$ is unclear.
If $V$ is scalar-valued, i.e. $V : \R^m \to \R$, then the gradient is defined to be
\begin{align*}
    \nabla V(x) = \tD V(x)^\top \in \R^m,
\end{align*}
since the differential $\tD V(x)$ is a $1\times m$ row vector.
In this case, the Hessian of $V$ at $x$ is defined to be
\begin{align*}
    \Hess V(x) = \tD \nabla V(x) \in \R^{m\times m}.
\end{align*}

Let $\grpG \leq \GL(m)$
\footnote{We write $\grpG \leq \GL(m)$ rather than $\grpG \subseteq \GL(m)$ to emphasise that $\grpG$ is not only a subset but also a subgroup.}
be an $m\times m$ matrix Lie-group with manifold dimension $d = \dim \grpG$.
We denote the Lie-algebra $\gothg \leq \gothgl(m)$
\footnote{Similarly to the group case, the use of $\leq$ rather than $\subseteq$ emphasises that $\gothg$ is a Lie subalgebra (not just a subset) of $\gothgl(m)$.}
, and define the `wedge' operator $\cdot^\wedge : \R^d \to \gothg$ and `vee' operator $\cdot^\vee : \gothg \to \R^d$ to be an identification of the vector space underlying $\gothg$ with $\R^d$; that is, for each $u \in \R^d$, there is a unique corresponding vector $u^\wedge \in \gothg$, and $(u^\wedge)^\vee = u$.
Note that the wedge and vee operators can always be defined, although they are not unique and depend on a choice of basis for $\gothg$.
%%%%%%%%%% - comment removed - %%%%%%%%%%
%%%%%%%%%% - comment removed - %%%%%%%%%%
We define the matrices $\actionMat_\xi \in \R^{m \times d}$ and $\actionMatT_\xi \in \R^{m \times d}$ by
\begin{align}\label{eq:action_matrices}
    \actionMat_\xi \, \Delta^\vee &:= \Delta \xi, &
    \actionMatT_\xi \, \Delta^\vee &:= \Delta^\top \xi,
\end{align}
for every $\xi \in \R^m$ and $\Delta \in \gothg \leq \gothgl(m)$.

%%%%%%%%%% - comment removed - %%%%%%%%%%
\section{Problem Description}
%%%%%%%%%% - comment removed - %%%%%%%%%%

\change{
We consider a class of systems on homogeneous spaces that admit an expression as algebraically linear in embedding coordinates.
This includes systems on matrix Lie-groups where the measurement is a linear group action, such as attitude estimation on $\SO(3)$ from bearing measurements \cite{2008_mahony_NonlinearComplementaryFilters} and rigid body pose estimation on $\SE(2)$ or $\SE(3)$ from landmark measurements \cite{2006_aguiar_MinimumenergyStateEstimation,2007_baldwin_ComplementaryFilterDesign}.
This also includes systems on homogeneous spaces such as quaternion attitude estimation on $\Sph^3$ \cite{2006_choukroun_NovelQuaternionKalman} (also see \S~\ref{sec:example}) and bearing estimation on $\Sph^2$ \cite{2008_mahony_NonlinearComplementaryFilters}.
However, this excludes systems where the output is given by a nonlinear group action, such as homography estimation on $\SL(3)$ from image features \cite{2012_mahony_NonlinearComplementaryFilters}.
}

Let $\grpG \leq \GL(m)$ be a $d$-dimensional matrix Lie-group with Lie-algebra $\gothg \leq \gothgl(m)$, and consider the embedded manifold $\calM \subseteq \R^m$ defined by
\begin{align}
    \calM = \grpG \mr{\xi} = \cset{X^{-1} \mr{\xi} \in \R^m}{ X \in \grpG }, \label{eq:manifold_dfn}
\end{align}
for some fixed \emph{origin} $\mr{\xi} \in \R^m$.
\change{Note that $\calM$ is a homogeneous space of $\grpG$ by definition.}
Then, for any $\xi \in \calM$, the tangent space at $\xi$ is given by
\begin{align}
    \tT_\xi \calM := \cset{U \xi \in \R^m}{ U \in \gothg} = \cset{\actionMat_{\xi} U^\vee }{ U \in \gothg}.
    \label{eq:tangent_space_dfn}
\end{align}
Consider a deterministic nonlinear system model defined on $\calM$ with outputs in $\R^n$ of the form
\begin{subequations}\label{eq:system_dfn}
    \begin{align}
        \dot{\xi}(t) &= - U_t \xi(t),  \label{eq:system_dfn_dynamics}\\
        y(t) &= C_t \xi(t)  \label{eq:system_dfn_measurement}
    \end{align}
\end{subequations}
where $U_t \in \gothg$ and $C_t \in \R^{n\times m}$ are known time-varying matrices.
This is an \emph{ideal system} that does not take into account the imperfections and noise inherent in real-world systems.
\change{Note that we use the brackets around $t$ in $\xi(t)$ to indicate this is a solution to an ODE or a part of the design process, while we use the subscript $t$ in $U_t$ to indicate that this is a known parameter.
In the sequel we will use $(t)$ when introducing variables but may then drop this explicit notation for the sake of readability.}

The dynamics \eqref{eq:system_dfn_dynamics} lift to classical left invariant dynamics on the Lie-group
\[
\dot{X}(t) = X(t) U_t
\]
via the group action \eqref{eq:manifold_dfn};
that is, $\ddt \xi(t) = \ddt (X^{-1} \mr{\xi}) = - U_t \xi$.
This is a common model encountered in real-world systems with linear group actions such as directions under rotation, positions of landmarks under rigid-body translation, etc.
Note that the output \eqref{eq:system_dfn_measurement} is linear with respect to the embedding of the manifold in $\R^m$.
This is a key assumption in the present work that is also common in examples such as \cite{2005_mahony_ComplementaryFilterDesign,2007_baldwin_ComplementaryFilterDesign,2006_choukroun_NovelQuaternionKalman}.

%%%%%%%%%% - comment removed - %%%%%%%%%%
%%%%%%%%%% - comment removed - %%%%%%%%%%
%%%%%%%%%% - comment removed - %%%%%%%%%%
%%%%%%%%%% - comment removed - %%%%%%%%%%
%%%%%%%%%% - comment removed - %%%%%%%%%%
%%%%%%%%%% - comment removed - %%%%%%%%%%
%%%%%%%%%% - comment removed - %%%%%%%%%%
%%%%%%%%%% - comment removed - %%%%%%%%%%
%%%%%%%%%% - comment removed - %%%%%%%%%%
%%%%%%%%%% - comment removed - %%%%%%%%%%
%%%%%%%%%% - comment removed - %%%%%%%%%%
%%%%%%%%%% - comment removed - %%%%%%%%%%
%%%%%%%%%% - comment removed - %%%%%%%%%%
%%%%%%%%%% - comment removed - %%%%%%%%%%

\begin{remark}
    While it may appear straightforward to simply treat \eqref{eq:error_system} as a linear system, this approach will not take into account the manifold constraint $\xi \in \calM$.
    The problem with this is twofold.
    First, a general linear-systems filter for \eqref{eq:system_dfn} will not enforce the constraint and there is no intrinsic method to reproject a general $\hat{\xi} \in \R^m$ onto the manifold $\calM$ \cite{2006_aguiar_MinimumenergyStateEstimation,2006_choukroun_NovelQuaternionKalman}.
    Second, the system may be observable when restricted to $\calM$ without being observable when considered over the ambient space $\R^m$.
    %%%%%%%%%% - comment removed - %%%%%%%%%%
\end{remark}

\subsection{Error System}

We introduce an observer or reference state in order to rewrite the system dynamics in error form.
Define an \emph{observer state} $\hat{X} \in \grpG$ with dynamics
\begin{align}\label{eq:reference_trajectory}
    \dot{\hat{X}}(t) &= \hat{X}(t) U_t + \Delta(t) \hat{X}(t), & \hat{X}(0) = \hat{X}_0
\end{align}
where $\Delta \in \gothg$ is a correction term that remains to be designed.
For now, $\Delta$ will be treated as an arbitrary time-varying signal, and the error system will be analysed without making any assumptions about how $\Delta$ is chosen.
Given a trajectory $\xi(t) \in \calM$ of the system \eqref{eq:system_dfn}, the \emph{equivariant error} \cite{2023_vangoor_EquivariantFilterEqF} is
\begin{align*}
    e(t) &:= \hat{X}(t) \xi(t) \in \calM \hookrightarrow \R^m.
\end{align*}
The ideal system \eqref{eq:system_dfn} and observer dynamics \eqref{eq:reference_trajectory} then combine to yield an \emph{ideal error system},
\begin{subequations}\label{eq:error_system_nominal}
\begin{align}
    \dot{e}(t) &= \Delta(t) e(t), \label{eq:error_system_nominal_dyn}\\
    y(t) &= C_t \hat{X}(t)^{-1} e(t).\label{eq:error_system_nominal_meas}
\end{align}
\end{subequations}

To capture the fact that real-world systems may not obey their models exactly, we modify the ideal error system model to include unknown error signals.
Define the \emph{error system}
\begin{subequations}\label{eq:error_system}
    \begin{align}
        \dot{e}(t) &= \Delta(t) e(t) +  B_t \mu(t), \label{eq:error_system_dyn}\\
        y(t) &= C_t \hat{X}(t)^{-1} e(t) + \nu(t), \label{eq:error_system_meas}
    \end{align}
\end{subequations}
where:
\begin{itemize}
    %%%%%%%%%% - comment removed - %%%%%%%%%%
    %%%%%%%%%% - comment removed - %%%%%%%%%%
    %%%%%%%%%% - comment removed - %%%%%%%%%%
    \item $B_t \in \R^{m \times l}$ is a chosen input matrix.
    \item $\mu(t) \in \R^l$ is the unknown dynamics error.
    \item $\nu(t) \in \R^n$ is the unknown output error.
\end{itemize}
Note that these dynamics are determined by the choice of $\Delta \in \gothg$.
The role of the input matrix $B_t$ is to constrain the directions in which the dynamics error $\mu(t)$ can affect $\dot{e}(t)$.
In practice, $B_t$ is associated with the geometric constraints of the state space (see Section \ref{sec:gain_selection}), but the developments in this paper will treat it simply as a known, arbitrary, time-varying matrix.
For fixed signals $U_t, C_t, y(t), B_t, \hat{X}(t), \Delta(t)$, we define the set
\begin{align}\label{eq:trajectory_set}
    \calT_{[0,t]} = \cset{(e_{[0,t]},\mu_{[0,t]},\nu_{[0,t]})}{\text{$e,\mu,\nu$ satisfy \eqref{eq:error_system}}}.
\end{align}
The notation $e_{[0,t]}$ emphasises that we refer to a trajectory in $\R^m$ rather than a single point $e \in \R^m$.
\change{
For a trajectory $(e_{[0,t]},\mu_{[0,t]},\nu_{[0,t]}) \in \calT_{[0,t]}$, the initial condition $e(0)$ and the signal $\mu_{[0,t]}$ determine $e{[0,t]}$ by the error dynamics \eqref{eq:error_system_dyn}, and this in turn determines $\nu_{[0,t]}$ by the error system measurement \eqref{eq:error_system_meas}.
%%%%%%%%%% - comment removed - %%%%%%%%%%
}
The error signals $\mu(t)$ and $\nu(t)$ are modelled as deterministic noise signals; that is, \emph{a-priori} unknown time-sequences that do not necessarily have a stochastic interpretation \cite{1968_mortensen_MaximumlikelihoodRecursiveNonlinear,2013_zamani_MinimumEnergyFilteringAttitude}.
The problem addressed in this paper is to find an error state $e^\star(t) \in \calM$ that is optimal in terms of a cost functional that penalises large error signals $\mu(t)$ and $\nu(t)$, subject to the dynamics and measurements \eqref{eq:system_dfn}.

%%%%%%%%%% - comment removed - %%%%%%%%%%
\subsection{Minimum Energy Filtering}
%%%%%%%%%% - comment removed - %%%%%%%%%%

Classical minimum energy filtering chooses the correction term to minimise a cost functional, defined in terms of the error signals, subject to the system dynamics \eqref{eq:system_dfn}.
In this paper, we study the error system \eqref{eq:error_system} around an arbitrary known observer trajectory $\hat{X}$ \eqref{eq:reference_trajectory} rather than working directly with the original system definition.
Provided an optimal error trajectory $e^\star$, a corresponding state estimate is recovered by
\[
\hat{\xi} = \hat{X}^{-1} e^\star(t).
\]

The optimality criterion in this paper is developed in terms of the error system \eqref{eq:error_system}.
Define the \emph{initial cost} $V_0 : \R^m \to \R^+$ and the \emph{stage cost} $\ell_\tau : \R^l \times \R^n \to \R^+$ to be
\begin{align}
    V_0(e(0)) &:= \frac{1}{2} \vert e(0) - \hat{X}(0) \xi_0 \vert_{H_0}^2, \label{eq:initial_cost} \\
    \ell_\tau(\mu(\tau), \nu(\tau)) &:=  \frac{1}{2} \vert \mu(\tau) \vert^2_{Q_\tau}
    + \frac{1}{2} \vert \nu(\tau) \vert^2_{R_\tau}, \label{eq:state_cost}
\end{align}
in terms of the state error $\mu(t)$ and output error $\nu(t)$,
where
\begin{itemize}
    \item $\xi_0 \in \calM$ is the \emph{initial state estimate}.
    \item $H_0 \in \Sym_+(m)$ is the \emph{initial gain}.
    \item $Q_\tau \in \Sym_+(l)$ is the \emph{state gain}.
    \item $R_\tau \in \Sym_+(n)$ is the \emph{output gain}.
\end{itemize}
The \emph{cost functional} $\calJ_t : \calT_{[0,t]}  \to \R^+$ is defined to be
\begin{align*}
    \calJ_t(e_{[0,t]},\mu_{[0,t]},\nu_{[0,t]}) &:= V_0(e(0)) + \int_0^t \ell_\tau(\mu(\tau), \nu(\tau)) \td \tau.
\end{align*}
The cost functional is a measure of how large the error terms $\mu_{[0,t]}$ and $\nu_{[0,t]}$ must be chosen in order that the error trajectory $e_{[0,t]}$ satisfies \eqref{eq:error_system}.
\change{
In other words, the cost functional penalises large deviations $\mu$ and $\nu$ of the error system \eqref{eq:error_system} from the nominal error system \eqref{eq:error_system_nominal}.
}
%%%%%%%%%% - comment removed - %%%%%%%%%%
%%%%%%%%%% - comment removed - %%%%%%%%%%

The \emph{value function} $V_t : \R^{m} \to \R^+$ is defined as the cost of the final state $e(t)$ given that the rest of the trajectory is chosen to minimise $\calJ$.
That is,
\begin{align}\label{eq:value_function_dfn}
    V_t(e) := \min\cset{\calJ(e_{[0,t]}, \mu_{[0,t]}, \nu_{[0,t]})}{e_{[0,t]}(t) = e}
\end{align}
It is straightforward to verify that, as the notation suggests, the initial condition $V_0$ of $V_t$ is exactly the initial cost defined in \eqref{eq:initial_cost}.
The Hamilton-Jacobi-Bellman equation \cite{2005_bertsekas_DynamicProgrammingOptimal} is applied in the following Lemma and corollaries to provide recursive formulae for the evolution of $V_t$, its gradient $\nabla_e V_t$, and its Hessian $\Hess_e V_t$.
The proofs are provided in the Appendix.

\begin{lemma}\label{lem:value_function_derivative}
The time-differential of the value function $V_t$ defined in \eqref{eq:value_function_dfn} is given by
\begin{align}
    \tD_t V_t(e) &= - \langle \nabla_e V_t(e), \Delta e \rangle
    - \frac{1}{2} \vert B_t^\top \nabla_e V_t(e) \vert_{Q_t^{-1}}^2
    \notag \\ &\hspace{1cm}
    + \frac{1}{2} \vert y - C_t \hat{X}^{-1} e \vert^2_{R_t}.
\end{align}
\end{lemma}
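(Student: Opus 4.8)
The plan is to run the standard Hamilton--Jacobi--Bellman argument: apply Bellman's principle of optimality to the definition \eqref{eq:value_function_dfn} of $V_t$ over the extended horizon $[0,t+h]$, split the trajectory at time $t$, and Taylor-expand in $h$. A preliminary simplification is to eliminate $\nu$: along any admissible trajectory the output error is slaved to the error state through \eqref{eq:error_system_meas}, $\nu(\tau) = y(\tau) - C_\tau \hat{X}(\tau)^{-1} e(\tau)$, so the stage cost $\ell_\tau$ can be viewed as a function of $\mu(\tau)$ and $e(\tau)$ alone, and a trajectory in $\calT_{[0,t]}$ ending at a prescribed point $e$ is parametrised entirely by the signal $\mu$.

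With this in hand, Bellman's principle gives, for $h>0$,
\[
V_{t+h}(e) = \min_{\mu_{[t,t+h]}} \Bigl\{ V_t(e(t)) + \int_t^{t+h} \ell_\tau\bigl(\mu(\tau),\, y(\tau) - C_\tau \hat{X}(\tau)^{-1} e(\tau)\bigr)\, \td\tau \Bigr\},
\]
where $e$ is governed by \eqref{eq:error_system_dyn} on $[t,t+h]$ with terminal condition $e(t+h)=e$. Integrating \eqref{eq:error_system_dyn} backward from $e$ gives $e(t) = e - h(\Delta e + B_t \mu) + \Order(h^2)$ with $\mu := \mu(t)$; the integral term equals $h\,\ell_t(\mu,\, y - C_t \hat{X}^{-1} e) + \Order(h^2)$; and (assuming $V_t$ differentiable in $e$) $V_t(e(t)) = V_t(e) - h\,\langle \nabla_e V_t(e),\, \Delta e + B_t \mu\rangle + \Order(h^2)$. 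Substituting collapses the minimisation over $\mu_{[t,t+h]}$ to a finite-dimensional one:
\[
V_{t+h}(e) - V_t(e) = h \min_{\mu \in \R^l}\Bigl[ -\langle \nabla_e V_t(e),\, \Delta e + B_t \mu\rangle + \tfrac12\,\vert \mu\vert_{Q_t}^2 + \tfrac12\,\vert y - C_t \hat{X}^{-1} e\vert_{R_t}^2 \Bigr] + \Order(h^2).
\]

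The bracketed expression is a strictly convex quadratic in $\mu$; setting its $\mu$-gradient to zero yields $\mu^\star = Q_t^{-1} B_t^\top \nabla_e V_t(e)$, and substituting back gives the minimum value $-\langle \nabla_e V_t(e), \Delta e\rangle - \tfrac12 \vert B_t^\top \nabla_e V_t(e)\vert_{Q_t^{-1}}^2 + \tfrac12 \vert y - C_t \hat{X}^{-1} e\vert_{R_t}^2$. Dividing by $h$ and letting $h\to 0$ produces exactly the claimed formula for $\tD_t V_t(e)$.

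The algebra in the last two paragraphs is routine; the delicate part is the justification of the dynamic-programming step and the $h\to 0$ passage --- namely that $V_t$ is finite and differentiable in $e$ so that the first-order expansion is legitimate, that the minima are attained, and that the $\Order(h^2)$ remainder is uniform enough to vanish after division by $h$. For this linear-dynamics, quadratic-cost problem these regularity facts follow from the explicit Riccati structure that the subsequent corollaries develop (with $V_t$ a smooth quadratic determined by its gradient and Hessian recursions), so I would either invoke that structure or carry out the expansion within the class of smooth quadratic value functions, which is closed under the recursion.
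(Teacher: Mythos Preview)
Your proposal is correct and follows essentially the same route as the paper: both derive the HJB equation for $V_t$, minimise the resulting quadratic in $\mu$ to obtain $\mu^\star = Q_t^{-1} B_t^\top \nabla_e V_t(e)$, and substitute back. The only difference is cosmetic --- the paper invokes the HJB equation directly from a reference, whereas you unpack it from Bellman's principle and a first-order expansion in $h$; your added discussion of regularity is more careful than what the paper provides.
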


\begin{corollary}\label{cor:value_function_gradient_derivative}
    The gradient of the value function \eqref{eq:value_function_dfn} evolves according to
        \begin{align}
        \tD_t \nabla_e V_t(e)
        &=- \Hess_e V_t(e)[\Delta e]
        - \Delta^\top \nabla_e V_t(e)
        \notag \\ &\hspace{1cm}
        - \Hess_e V_t(e) B_t Q_t^{-1}B_t^\top \nabla_e V_t(e)
        \notag \\ &\hspace{1cm}
        + \hat{X}^{-\top} C_t^\top R_t(C_t \hat{X}^{-1} e - y).
        \label{eq:gradient_derivative}
        \end{align}
\end{corollary}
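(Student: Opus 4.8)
The plan is to obtain \eqref{eq:gradient_derivative} by differentiating, with respect to $e$, the expression for $\tD_t V_t(e)$ supplied by Lemma~\ref{lem:value_function_derivative}. The underlying observation is that $\tD_t \nabla_e V_t(e) = \nabla_e\bigl(\tD_t V_t(e)\bigr)$, i.e.\ the order of the $t$- and $e$-derivatives may be exchanged; granting this, it suffices to apply $\nabla_e$ term by term to the right-hand side of Lemma~\ref{lem:value_function_derivative}, and each term is handled by the product and chain rules for gradients recalled in the Preliminaries.

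First I would treat the term $-\langle \nabla_e V_t(e), \Delta e\rangle$. Writing it as the inner product of the two $e$-dependent vectors $\nabla_e V_t(e)$ and $\Delta e$, and using $\nabla_e\langle f(e), g(e)\rangle = \tD f(e)^\top g(e) + \tD g(e)^\top f(e)$ with $f = \nabla_e V_t$ (so that $\tD f = \Hess_e V_t$, which is symmetric) and $g(e) = \Delta e$ (so that $\tD g = \Delta$), this contributes $-\Hess_e V_t(e)[\Delta e] - \Delta^\top \nabla_e V_t(e)$, the first two terms of \eqref{eq:gradient_derivative}. Next, for $-\frac{1}{2}\vert B_t^\top \nabla_e V_t(e)\vert_{Q_t^{-1}}^2$, set $h(e) := B_t^\top \nabla_e V_t(e)$ so that $\tD h(e) = B_t^\top \Hess_e V_t(e)$; then $\nabla_e\bigl(\frac{1}{2}\vert h(e)\vert_{Q_t^{-1}}^2\bigr) = \tD h(e)^\top Q_t^{-1} h(e) = \Hess_e V_t(e) B_t Q_t^{-1} B_t^\top \nabla_e V_t(e)$, contributing this expression with a minus sign. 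Finally, for $\frac{1}{2}\vert y - C_t \hat{X}^{-1} e\vert_{R_t}^2$, set $k(e) := y - C_t \hat{X}^{-1} e$ with $\tD k(e) = -C_t \hat{X}^{-1}$; then its $e$-gradient is $\tD k(e)^\top R_t k(e) = -\hat{X}^{-\top} C_t^\top R_t\,(y - C_t \hat{X}^{-1} e) = \hat{X}^{-\top} C_t^\top R_t\,(C_t \hat{X}^{-1} e - y)$. Summing these three contributions yields exactly \eqref{eq:gradient_derivative}.

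I expect the main point requiring care to be the justification of the derivative interchange $\tD_t \nabla_e V_t = \nabla_e \tD_t V_t$, which amounts to enough joint regularity of $V_t(e)$ in $(t,e)$ for the mixed partials to be continuous (hence equal). This should follow from the structure of the problem: the error system \eqref{eq:error_system} is linear in $e$ and the costs \eqref{eq:initial_cost}--\eqref{eq:state_cost} are quadratic, so $V_t$ is a quadratic form in $e$ whose coefficient matrix and linear part evolve smoothly in $t$ under mild assumptions on the data $U_t, C_t, B_t, Q_t, R_t, \hat{X}(t)$; consequently $V_t$ is $C^\infty$ in $e$ for each $t$ and $C^1$ in $t$ for each $e$ with continuous mixed partial, and Clairaut's theorem applies. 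Once this is in place, the remaining steps are the routine gradient computations outlined above.
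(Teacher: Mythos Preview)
Your proposal is correct and follows essentially the same approach as the paper: both differentiate the expression from Lemma~\ref{lem:value_function_derivative} with respect to $e$, interchange the $t$- and $e$-derivatives, and handle each term via the product/chain rule. The only cosmetic difference is that the paper computes directional derivatives $\tD_e(\tD_t V_t(e))[\delta]$ and then transposes at the end, whereas you work directly with $\nabla_e$; your added remark justifying the derivative interchange via the quadratic structure of $V_t$ is a welcome detail that the paper leaves implicit.
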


\begin{corollary}\label{cor:hess}
The Hessian $H(t) = \Hess_e V_t(e)$ of the value function \eqref{eq:value_function_dfn} is independent of $e$ and has dynamics
\begin{align}
    %%%%%%%%%% - comment removed - %%%%%%%%%%
    \dot{H} &= - H \Delta
    - \Delta^\top H
    - H B_t Q_t^{-1}B_t^\top H
    \notag \\ &\hspace{1cm}
    + \hat{X}^{-\top} C_t^\top R_t C_t \hat{X}^{-1}.\label{eq:hessian_derivative}
\end{align}
\end{corollary}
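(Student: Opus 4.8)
The plan is to read off the Hessian dynamics from the gradient evolution of Corollary~\ref{cor:value_function_gradient_derivative} by differentiating once more in $e$, after first establishing the (non-obvious) claim that $\Hess_e V_t(e)$ does not depend on $e$. The cleanest route to the $e$-independence is a quadratic-ansatz argument combined with uniqueness: the initial cost \eqref{eq:initial_cost} is quadratic in $e(0)$, so $\Hess_e V_0(e) = H_0$ is constant in $e$; substituting the ansatz $V_t(e) = \frac{1}{2} e^\top H(t) e + b(t)^\top e + c(t)$ into the value-function PDE of Lemma~\ref{lem:value_function_derivative} and matching terms of equal degree in $e$ yields ordinary differential equations for $H(t)$, $b(t)$, $c(t)$; since the ansatz holds at $t=0$ and the PDE has unique solutions, it is preserved for all $t$, so $\Hess_e V_t(e) = H(t)$ is indeed independent of $e$. (Equivalently, one can argue directly from \eqref{eq:gradient_derivative}: if $\Hess_e V_t$ is $e$-independent at some time, the right-hand side of \eqref{eq:gradient_derivative} is affine in $e$ with $e$-independent Jacobian, so $\nabla_e V_t$ stays affine in $e$ and the Hessian stays $e$-independent.)

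With $H(t) := \Hess_e V_t(e)$ known to be independent of $e$, I would then apply $\tD_e$ to both sides of \eqref{eq:gradient_derivative}, using that $\tD_e$ and $\tD_t$ commute for a sufficiently smooth $V$, so the left-hand side becomes $\dot H$. Term by term on the right-hand side: $\tD_e\bigl(-\Hess_e V_t(e)[\Delta e]\bigr) = -H \Delta$, the prospective third-derivative contribution vanishing because $H$ is $e$-independent; $\tD_e\bigl(-\Delta^\top \nabla_e V_t(e)\bigr) = -\Delta^\top H$; $\tD_e\bigl(-\Hess_e V_t(e) B_t Q_t^{-1} B_t^\top \nabla_e V_t(e)\bigr) = -H B_t Q_t^{-1} B_t^\top H$, again using $\tD_e H = 0$ and $\nabla_e V_t(e) = H e + b$; and $\tD_e\bigl(\hat{X}^{-\top} C_t^\top R_t (C_t \hat{X}^{-1} e - y)\bigr) = \hat{X}^{-\top} C_t^\top R_t C_t \hat{X}^{-1}$. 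Summing these four contributions produces exactly \eqref{eq:hessian_derivative}, and the initial condition $H(0) = H_0$ is read off from \eqref{eq:initial_cost}.

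I expect the main obstacle to be the rigorous (non-circular) justification that $\Hess_e V_t$ is independent of $e$; once that is secured, the remainder is a routine differentiation. The safest presentation is therefore the quadratic-ansatz substitution into Lemma~\ref{lem:value_function_derivative}, since it simultaneously delivers the $e$-independence of the Hessian and the Riccati equation \eqref{eq:hessian_derivative} as the coefficient of the quadratic term, with the linear and constant coefficients reproducing (and being consistent with) Corollary~\ref{cor:value_function_gradient_derivative} and Lemma~\ref{lem:value_function_derivative}. A minor point to check along the way is that the cross term $\Hess_e V_t(e) B_t Q_t^{-1} B_t^\top \nabla_e V_t(e)$ is genuinely affine, rather than quadratic, in $e$, which is what keeps the ansatz closed under the dynamics.
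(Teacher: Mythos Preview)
Your proposal is correct and follows essentially the same approach as the paper: differentiate \eqref{eq:gradient_derivative} in $e$, then use the quadratic initial data plus a uniqueness argument to kill the third-derivative terms and recover the Riccati ODE. The paper carries this out at the Hessian level (writing the full PDE for $\Hess_e V_t(e)$, positing an $e$-independent $H$, and invoking the Cauchy--Kowalevski theorem for uniqueness), which is exactly your parenthetical ``alternative'' argument; your primary quadratic-ansatz route is the same idea applied one level up at $V_t$ itself, and is arguably the cleaner presentation.
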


%%%%%%%%%% - comment removed - %%%%%%%%%%
%%%%%%%%%% - comment removed - %%%%%%%%%%

%%%%%%%%%% - comment removed - %%%%%%%%%%
\section{Observer Design}
%%%%%%%%%% - comment removed - %%%%%%%%%%

The insight of Mortensen \cite{1968_mortensen_MaximumlikelihoodRecursiveNonlinear} was that the optimal solution of the minimum energy filter is characterised by tracking the critical point of $V_t(e)$.
In the constrained case, where $e \in \calM$ is enforced, a \emph{constrained critical point} is a critical point of the value function \emph{restricted to the manifold}  $V_t|_\calM : \calM \to \R$.
In other words, a constrained critical point is a value $e \in \calM$ for which $\langle \nabla V_t(e), \zeta \rangle \equiv 0$ for all $\zeta \in \tT_{e(t)} \calM$.
This means that the gradient of the value function $V_t$ only needs to be nullified in the tangent space of the manifold at $e$, and may be non-zero in the normal space (the orthogonal complement of the tangent space).

Choose $\mr{\xi} := \xi_0$ and choose $\hat{X}_0 \in \grpG$ such that $\hat{X}_0 \xi_0 = \mr{\xi}$, noting that $\hat{X}_0 \in \grpG$ may not be fully determined by this condition but any such choice is sufficient.
It is easily verified that the critical point of $\nabla V_0(e(0))$ at time $t = 0$ is $e^\star(0) = \mr{\xi} \in \calM$.
The approach taken here is to design the observer $\hat{X}$ trajectory by defining $\Delta \in \gothg$ to ensure $e(t) = e^\star = \mr{\xi}$ is a constrained critical point for all time.
Note that this means that $\Delta$ will depend on the measurements $y(t)$ and $U_t$, and on the input, output, and gain matrices.
This is compatible with the analysis undertaken so far, since no assumptions were made on $\Delta$ other than that it is not a function of the trajectory optimisation parameters $(e,\mu,\nu)_{[0,t]}$.

\begin{theorem}
\label{thm:critical_point}
Recall the observer dynamics \eqref{eq:reference_trajectory}.
Define $\Delta \in \gothg$ to satisfy
\begin{align}\label{eq:observer_definition}
    P &:= \actionMat_{\mr{\xi}}^\top H \actionMat_{\mr{\xi}}
    + \actionMat_{\mr{\xi}}^\top \actionMatT_{\JGradient}, \\
    %%%%%%%%%% - comment removed - %%%%%%%%%%
    P \Delta^\vee &= \actionMat_{\mr{\xi}}^\top \left(
        \hat{X}^{-\top} C_t^\top R_t(C_t \hat{X}^{-1} \mr{\xi} - y)
    %%%%%%%%%% - comment removed - %%%%%%%%%%
        - H B_t Q_t^{-1} B_t^\top \JGradient
        \right), \notag
   %%%%%%%%%% - comment removed - %%%%%%%%%%
%%%%%%%%%% - comment removed - %%%%%%%%%%
%%%%%%%%%% - comment removed - %%%%%%%%%%
%%%%%%%%%% - comment removed - %%%%%%%%%%
%%%%%%%%%% - comment removed - %%%%%%%%%%
%%%%%%%%%% - comment removed - %%%%%%%%%%
%%%%%%%%%% - comment removed - %%%%%%%%%%
%%%%%%%%%% - comment removed - %%%%%%%%%%
%%%%%%%%%% - comment removed - %%%%%%%%%%
\end{align}
where $H \in \Sym_+(m)$ denotes the Hessian of $V_t$, and $\JGradient = \nabla_e V_t(\mr{\xi}) \in \R^m$ denotes the gradient of $V_t$ at $\mr{\xi}$.
Let $V_t \vert_\calM$ denote the restriction of the value function $V_t$ \eqref{eq:value_function_dfn} to the manifold $\calM$.
Then $e^\star = \mr{\xi} \in \calM$ is a critical point of $V_t \vert_\calM$ for all time.
\end{theorem}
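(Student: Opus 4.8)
The plan is to translate the geometric claim ``$\mr{\xi}$ is a critical point of $V_t\vert_\calM$'' into a purely algebraic condition on the ambient gradient, and then to verify that the observer definition \eqref{eq:observer_definition} is precisely what makes that algebraic quantity stationary in time. By the tangent space formula \eqref{eq:tangent_space_dfn}, $\tT_{\mr{\xi}}\calM = \{\actionMat_{\mr{\xi}} w : w \in \R^d\}$, so $\mr{\xi} \in \calM$ is a critical point of $V_t\vert_\calM$ if and only if $\langle \JGradient(t), \actionMat_{\mr{\xi}} w \rangle = 0$ for all $w$, i.e. if and only if $\actionMat_{\mr{\xi}}^\top \JGradient(t) = 0$, where $\JGradient(t) = \nabla_e V_t(\mr{\xi})$. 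It therefore suffices to show $\actionMat_{\mr{\xi}}^\top \JGradient(t) \equiv 0$.

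First I would dispatch the base case $t = 0$. From \eqref{eq:initial_cost} one has $\nabla_e V_0(e) = H_0\bigl(e - \hat{X}(0)\xi_0\bigr)$; evaluating at $e = \mr{\xi}$ and using the choices $\mr{\xi} := \xi_0$ and $\hat{X}_0 \xi_0 = \mr{\xi}$ gives $\JGradient(0) = 0$, hence $\actionMat_{\mr{\xi}}^\top \JGradient(0) = 0$.

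The core of the argument is to differentiate $\actionMat_{\mr{\xi}}^\top \JGradient(t)$ in time. Since $\mr{\xi}$ is a fixed point of $\R^m$ and $\actionMat_{\mr{\xi}}$ is a constant matrix, $\ddt\bigl(\actionMat_{\mr{\xi}}^\top \JGradient\bigr) = \actionMat_{\mr{\xi}}^\top\, \tD_t\nabla_e V_t(\mr{\xi})$, and here I would substitute the right-hand side of Corollary~\ref{cor:value_function_gradient_derivative} evaluated at $e = \mr{\xi}$, using Corollary~\ref{cor:hess} to replace $\Hess_e V_t$ by the $e$-independent matrix $H$. The two $\Delta$-dependent terms are then rewritten with the action matrices \eqref{eq:action_matrices}, namely $\Delta\mr{\xi} = \actionMat_{\mr{\xi}}\Delta^\vee$ and $\Delta^\top\JGradient = \actionMatT_{\JGradient}\Delta^\vee$, so that premultiplication by $\actionMat_{\mr{\xi}}^\top$ produces exactly the block $P\Delta^\vee = \bigl(\actionMat_{\mr{\xi}}^\top H \actionMat_{\mr{\xi}} + \actionMat_{\mr{\xi}}^\top\actionMatT_{\JGradient}\bigr)\Delta^\vee$ appearing in \eqref{eq:observer_definition}. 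Inserting the defining equation for $\Delta^\vee$, the two output-innovation terms $\actionMat_{\mr{\xi}}^\top\hat{X}^{-\top}C_t^\top R_t(C_t\hat{X}^{-1}\mr{\xi} - y)$ cancel and the two terms $\actionMat_{\mr{\xi}}^\top H B_t Q_t^{-1}B_t^\top\JGradient$ cancel, leaving $\actionMat_{\mr{\xi}}^\top\,\tD_t\nabla_e V_t(\mr{\xi}) = 0$. Combined with the base case, the (trivial) ODE $\ddt\bigl(\actionMat_{\mr{\xi}}^\top\JGradient\bigr) = 0$ integrates to $\actionMat_{\mr{\xi}}^\top\JGradient(t) \equiv 0$, which is the asserted constrained critical point property.

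The computation is short and contains no genuinely hard step; the only points needing care are (i) the interchange $\ddt\nabla_e V_t(\mr{\xi}) = \bigl(\tD_t\nabla_e V_t\bigr)(e)\vert_{e=\mr{\xi}}$ and the $e$-independence of $H$, both of which are supplied by the corollaries, and (ii) bookkeeping of the action-matrix identities so that the gradient-evolution terms line up exactly with the blocks in \eqref{eq:observer_definition}. Implicit in the statement is that \eqref{eq:observer_definition} admits a solution $\Delta^\vee \in \R^d$; I would remark that the cancellation above holds for \emph{any} $\Delta$ satisfying the stated equation, so the criticality conclusion does not require $P$ to be invertible.
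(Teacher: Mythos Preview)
Your proposal is correct and follows essentially the same route as the paper: reduce the constrained criticality to $\actionMat_{\mr{\xi}}^\top \JGradient(t)=0$, verify it at $t=0$, differentiate using Corollary~\ref{cor:value_function_gradient_derivative} and the action-matrix identities, and observe that the observer definition \eqref{eq:observer_definition} is exactly the equation that annihilates $\actionMat_{\mr{\xi}}^\top \tD_t\nabla_e V_t(\mr{\xi})$. Your write-up is slightly more explicit (computing $\JGradient(0)$ directly, flagging the $e$-independence of $H$, and remarking that only solvability---not invertibility of $P$---is needed), but there is no substantive difference in strategy.
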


\begin{proof}
\change{
The condition that $e^\star = \mr{\xi} \in \calM$ is a critical point of $V_t$ can be expressed as $\nabla_e V_t \vert_\calM (\mr{\xi}) = 0$, which is equivalent to
\begin{align}\label{eq:optimality_condition}
    \actionMat_{\mr{\xi}}^\top \nabla_e V_t(\mr{\xi}) = 0,
\end{align}
Since $\tT_{\mr{\xi}} \calM = \image \actionMat_{\mr{\xi}}$.
It is immediately clear that this is satisfied at the initial condition $t=0$.
For any $t \geq 0$, differentiating \eqref{eq:optimality_condition} yields
\begin{align}\label{eq:differential_optimality_condition}
    \ddt \actionMat_{\mr{\xi}}^\top \nabla_e V_t(\mr{\xi}) = \actionMat_{\mr{\xi}}^\top \tD_t \nabla_e V_t(\mr{\xi}) = 0.
\end{align}
The remainder of the proof will show that this condition is indeed satisfied for all time.
}

Using the shorthand $\eta = \nabla_e V_t(\mr{\xi})$ and substituting in the dynamics \eqref{eq:gradient_derivative}, the optimality condition becomes
\begin{align*}
    0 &= \actionMat_{\mr{\xi}}^\top \dot{\JGradient} \\
    %%%%%%%%%% - comment removed - %%%%%%%%%%
    &= \actionMat_{\mr{\xi}}^\top \left(
        - H \Delta \mr{\xi} - \Delta^\top \JGradient
    - H B_t Q_t^{-1}B_t^\top \JGradient
    \right. \\ &\hspace{1.5cm}\left.
    + \hat{X}^{-\top} C_t^\top R_t(C_t \hat{X}^{-1} \mr{\xi} - y)
    \right) \\
    %%%%%%%%%% - comment removed - %%%%%%%%%%
    &= \actionMat_{\mr{\xi}}^\top \left(
        - H \actionMat_{\mr{\xi}} - \actionMatT_J \right) \Delta^\vee +
    \actionMat_{\mr{\xi}}^\top \left(- H B_t Q_t^{-1}B_t^\top \JGradient
    \right. \\ &\hspace{1.5cm}\left.
    + \hat{X}^{-\top} C_t^\top R_t(C_t \hat{X}^{-1} \mr{\xi} - y)
    \right) \\
    %%%%%%%%%% - comment removed - %%%%%%%%%%
    &= - P \Delta^\vee +
    \actionMat_{\mr{\xi}}^\top \left(- H B_t Q_t^{-1}B_t^\top \JGradient
    \right. \\ &\hspace{2.5cm}\left.
    + \hat{X}^{-\top} C_t^\top R_t(C_t \hat{X}^{-1} \mr{\xi} - y)
    \right),
\end{align*}
where $P$ is defined as in \eqref{eq:observer_definition}.
Thus, the differential optimality condition \eqref{eq:differential_optimality_condition} is precisely satisfied when $\Delta$ is defined by the proposed equation \eqref{eq:observer_definition}.
It follows that $\mr{\xi}$ remains a critical point of $V_t$ for all time.
\end{proof}

Minimising the constrained value function $V_t|_\calM$ is equivalent to minimising the cost functional $\calJ_t$ over trajectories $e_{[0,t]}$ that end on the manifold, $e_{[0,t]}(t) \in \calM$.
By choosing $\Delta$ as described in Theorem \ref{thm:critical_point}, the origin $\mr{\xi} = e^\star$ is always the minimum energy solution of the constrained value function.
%%%%%%%%%% - comment removed - %%%%%%%%%%
In other words, $\mr{\xi}$ is always the minimum energy `estimate' for the current error state $e^\star_{[0,t]}(t) \in \calM$.
Although the derivation of the filter is done in terms of the error state, the equivalent state estimate can be recovered easily by
\begin{align}
    \hat{\xi}(t) &:= \hat{X}(t)^{-1} e^\star_{[0,t]}(t) = \hat{X}(t)^{-1} \mr{\xi} \in \calM.
\end{align}
Crucially, the state estimate $\hat{\xi}$ will remain in the manifold for all time due to the relationship \eqref{eq:manifold_dfn}.

%%%%%%%%%% - comment removed - %%%%%%%%%%
\subsection{Selection of Gain Matrices}
\label{sec:gain_selection}
%%%%%%%%%% - comment removed - %%%%%%%%%%

The gain matrices $H_0, Q_t, R_t$ and the input matrix $B_t$ can be chosen to reflect the noise characteristics of a real-world system.
The initial Hessian $H_0$ represents the confidence of the initial state estimate information $\xi_0 \in \R^m$, and typically, it is sensible to choose $H_0$ small.
The state gain matrix $Q_t$ reflects confidence in the error dynamics, and $Q_t^{-1}$ represents the corresponding uncertainty.
In a practical system, much of the uncertainty in the error dynamics is associated with uncertainty in the measured velocity.
Let $\hat{U}_t \in \gothg$ denote a measurement of the velocity signal $U_t \in \gothg$, and assume that $\hat{U}_t^\vee$ is normally distributed with mean $U_t^\vee$ and a given covariance $M_t \in \Sym_+(d)$.
Then if $\xi \in \calM$ denotes the system state, one has
\begin{align*}
    \dot{e} &= \ddt \hat{X} \xi
    = \hat{X} (\hat{U}_t - U_t) \xi + \Delta \hat{X} \xi \\
    &= \Delta e + \hat{X} (\hat{U}_t - U_t) \hat{X}^{-1} e \\
    &= \Delta e + \actionMat_e \Ad_{\hat{X}}^\vee (\hat{U}_t - U_t)^\vee \\
    &\approx \Delta e + \actionMat_{\mr{\xi}} \Ad_{\hat{X}}^\vee (\hat{U}_t - U_t)^\vee,
\end{align*}
where the final step uses the approximation $e \approx \mr{\xi}$, and $\Ad_{\hat{X}}^\vee : \R^d \to \R^d$ is the linear map defined by $\Ad_{\hat{X}}^\vee U^\vee = (\hat{X} U \hat{X}^{-1})^\vee$.
We thus choose to define $B_t := \actionMat_{\mr{\xi}}\Ad_{\hat{X}}^\vee$ and $Q_t = M_t^{-1}$.

The output gain matrix $R_t$ represents the confidence in the measurement $y \in \R^n$, and $R_t^{-1}$ represents the uncertainty.
Let $\hat{z} = h(\xi) + \nu' \in \R^{n'}$ denote a physical measurement of the system for some output map $h :\calM \to \R^n$, where $\nu' \in \R^{n'}$ is a normally distributed noise term with zero-mean and covariance $\bar{R}^{-1} \in \Sym_+(n')$.
If $h(\xi) = C_t \xi$ for a matrix $C_t$, then we can simply choose $R = \bar{R}$.

In some cases (see the example in Section \ref{sec:example}), it is necessary to write an implicit measurement function to obtain the embedded form \eqref{eq:system_dfn_measurement}; that is,
\begin{align*}
    y = C_t(h(\xi)) \xi,
\end{align*}
where $C_t = C_t(h(\xi))$ is a matrix built from the physical measurement $h(\xi)$.
In this situation, we may employ linearisation to obtain a model for the error.
Let $z = h(\xi)$, then
\begin{align*}
    y(z, e) &= C_t (z) \hat{X}^{-1} e \\
    &= C_t(\hat{z} - \nu') \hat{X}^{-1} e, \\
    &\approx C_t(\hat{z}) \hat{X}^{-1} e - \tD C_t (\hat{z})[\nu']  \hat{X}^{-1} e \\
    &\approx C_t(\hat{z}) \hat{X}^{-1} e - \tD C_t (\hat{z})[\nu'] \hat{X}^{-1} \mr{\xi} \\
    &= C_t(\hat{z}) \hat{X}^{-1} e + \tD_z y(\hat{z}, \mr{\xi})[z - \hat{z}]
\end{align*}
where the first approximation linearises the implicit measurement $y(z, e) \in R^{n}$ in terms of $z$, and the second approximation once more uses $e \approx \mr{\xi}$.
In this case we thus choose to define
\begin{align*}
    R_t^{-1} = \tD_z y(\hat{z}, \mr{\xi}) \bar{R}^{-1} \tD_z y(\hat{z}, \mr{\xi})^\top.
\end{align*}

%%%%%%%%%% - comment removed - %%%%%%%%%%
\section{Example: Quaternion Attitude Estimation}
\label{sec:example}
%%%%%%%%%% - comment removed - %%%%%%%%%%
Quaternions provide a convenient setting to study the problem of attitude estimation provided angular velocity measurements and body-frame measurements of a time-varying reference vector.

\subsection{Problem Description of the Example}

Define $\bbH$ to be the set of unit quaternions, and define $q \in \bbH$ to be the unit quaternion describing a vehicle's orientation with respect to some inertial reference frame.
Note that $-q$ describes the same physical attitude as $q$.
We identify the space of quaternions $\bbH$ with the 3-sphere $\Sph^3 \subset \R^4$ by writing $q \in \bbH$ as a vector $(q_r, q_v) \in \R^4$, where $q_r \in \R$ is the real component of $q$ and $q_v \in \R^3$ is the vector of imaginary components of $q$.
The product of two quaternions $q, h \in \bbH$ is given by
\begin{align*}
    (q_r, q_v) * (h_r, h_v)
    &= (q_r h_r - q_v^\top h_v, \; q_r h_v + h_r q_v + q_v \times h_v).
\end{align*}

The dynamics of $q = (q_r, q_v)$ are given by
%%%%%%%%%% - comment removed - %%%%%%%%%%
\begin{align}
    \dot{q} &= q * (0, \; \frac{1}{2}\omega); \text{ \change{i.e.,}}\notag\\
    \ddt \begin{pmatrix}
        q_r \\ q_v
    \end{pmatrix}
    &= -\frac{1}{2}\begin{pmatrix}
        0 & \omega^\top \\
        -\omega & \omega^\times
    \end{pmatrix}
    \begin{pmatrix}
        q_r \\ q_v
    \end{pmatrix},
    \label{eq:quaternion_matrix_dynamics}
\end{align}
%%%%%%%%%% - comment removed - %%%%%%%%%%
where $\omega \in \R^3$ is a measurement of the angular velocity of the vehicle expressed in the body-frame, $*:\bbH \times \bbH \to \bbH$ denotes the quaternion product, and $\omega^\times \in \R^{3\times 3}$ is defined by the property that $\omega^\times v = \omega \times v$ for all $v \in \R^3$.

As shown by the matrix form \eqref{eq:quaternion_matrix_dynamics}, the dynamics of $q$ are linear with respect to the embedding of $\bbH \hookrightarrow \R^4$, and match the form \eqref{eq:system_dfn_dynamics}, where the Lie-algebra $\gothg$ and wedge operator are defined by
\begin{align*}
    \gothg &:= \cset{\delta^\wedge \in \R^{4\times 4}}{ \delta \in \R^3} \leq \gothgl(4), &
    \delta^\wedge &:= \begin{pmatrix}
        0 & \delta^\top \\
        -\delta & \delta^\times
    \end{pmatrix}.
\end{align*}
The vee operator $\cdot^\vee : \gothg \to \R^3$ is simply defined as the inverse of the wedge.
The corresponding Lie-group $\grpG \leq \GL(4)$ is defined by
\begin{align*}
    \grpG := \cset{\cos(\theta) I_4 + \sin(\theta) \delta^\wedge}
    {\delta \in \R^3,\; \vert \delta \vert = 1, \; \theta \in \Sph^1}
    %%%%%%%%%% - comment removed - %%%%%%%%%%
\end{align*}
which may be obtained by exponentiating elements of the Lie-algebra.
From the definition of $\gothg$, the matrices \eqref{eq:action_matrices} are
\begin{align*}
    \actionMat_q &=
    \begin{pmatrix}
        - q_v^\top \\
        q_r I_3 + q_v^\times
    \end{pmatrix} \in \R^{4\times 3}, &
    \actionMatT_q = - \actionMat_q  \in \R^{4\times 3}.
\end{align*}

Let $\mr{z} \in \R^3$ be a known time-varying reference vector in the inertial frame.
If $z \in \R^3$ is a measurement of $\mr{z}$ in the body-frame of the vehicle, then
\begin{align}
    \label{eq:quaternion_real_measurements}
    (0, z) = q^{-1} * (0, \mr{z}) * q.
\end{align}
This is a nonlinear measurement, but its defining equation can be manipulated to yield an implicit measurement that is linear with respect to the embedding $\bbH \subset \R^4$ (cf. Section \ref{sec:gain_selection}).
Specifically, by left-multiplying both sides by $q$ and subtracting, one has
\begin{subequations}
\begin{align}
    q * (0, z) - (0, \mr{z}) * q &= 0; \text{ \change{i.e.,}} \\
    \begin{pmatrix}
        0 & \mr{z}^\top - y^\top \\
        z - \mr{z} & -(z + \mr{z})^\times
    \end{pmatrix} \begin{pmatrix}
        q_r \\ q_v
    \end{pmatrix} &= 0.
    \label{eq:quaternion_matrix_measurement}
\end{align}
\end{subequations}
In other words, by treating $z$ as if it were an independent time-varying quantity, a linearly-embedded measurement function can be obtained, defined by
\begin{align*}
    y(z, q) &= C_t(z) q = 0, \\
    C_t &:= C_t(z) = \begin{pmatrix}
        0 & \mr{z}^\top - z^\top \\
        z - \mr{z} & -(z + \mr{z})^\times
    \end{pmatrix}.
\end{align*}

\begin{remark}
A similar interpretation of the dynamics and measurements was proposed in \cite{2006_choukroun_NovelQuaternionKalman}.
However, the solution proposed here is substantially different to this prior work.
While \cite{2006_choukroun_NovelQuaternionKalman} proposed to estimate the quaternion as a free vector in $\R^4$, our approach estimates the unit quaternion in $\bbH$ directly.
That is, in contrast to prior work, our proposed method does not require any form of normalisation to be added to the filter; the estimated quaternion will always remain on the manifold of unit-length quaternions due to the constraints inherent in updating the observer state $\hat{X} \in \grpG$.
\end{remark}

\subsection{Simulation Results}

The dynamics \eqref{eq:quaternion_matrix_dynamics} and measurement \eqref{eq:quaternion_matrix_measurement} are both of the required form \eqref{eq:system_dfn}.
As such, we define the observer $\hat{X} \in \grpG \leq \GL(4)$ to have dynamics
\begin{align*}
    \dot{\hat{X}} = -\frac{1}{2}\hat{X} \omega^\wedge + \Delta \hat{X},
\end{align*}
where $\Delta \in \gothg$ is determined according to \eqref{eq:observer_definition}.

To verify the proposed observer design, we carried out simulations of the dynamics \eqref{eq:quaternion_matrix_dynamics} with measurements \eqref{eq:quaternion_real_measurements}.
The initial condition was chosen to be $q(0) = (1, 0_3) \in \bbH$, and the input signal was determined by
\begin{align*}
    U_t &= \frac{1}{2} \omega(t)^\wedge, &
    \omega(t) &= \begin{pmatrix}
        0.1\cos(0.1t) & 0.0 & 0.2
    \end{pmatrix}^\top \in \R^3.
\end{align*}
The time-varying reference vector $\mr{z}$ was determined by
\begin{align}
    \mr{z}(t) = \begin{pmatrix}
        \sin(t) & 0 & \cos(t)
    \end{pmatrix}^\top \in \R^3,
    \label{eq:reference_vector_dfn}
\end{align}
and the measurement $z(t)$ was computed according to \eqref{eq:quaternion_real_measurements}.
The system was simulated using Lie-Group Euler integration at 0.1~s increments over a total time of 100~s.

The observer was simulated in two settings: in a noise-free setting, and in a setting where the measured input signal $\omega$ and measured output signal $z$ were corrupted with noise.
For the simulations with noise, the measured values $\hat{\omega}$ and $\hat{z}$ were obtained by
\begin{align*}
    \hat{\omega} &= \omega + \bar{\mu}, &
    \bar{\mu} &\sim N(0, \bar{Q}^{-1}) &
    \bar{Q}^{-1} &= 0.01^2 I_3, \\
    \hat{z} &= z + \bar{\nu}, &
    \bar{\nu} &\sim N(0, \bar{R}^{-1}) &
    \bar{R}^{-1} &= 1.0^2 I_3.
\end{align*}
The initial conditions of the observer were chosen to be
\begin{align*}
    \mr{q} &= (1, 0_3), \\
    \hat{X}(0) &= \exp\left(\frac{\pi}{2} (0.99,\; 0, \;0)^\wedge\right), \\
    H_0 &= (0.1)^2 \; \hat{X}(0)^\top \actionMat_{\hat{q}(0)}^\top \actionMat_{\hat{q}(0)} \hat{X}(0).
\end{align*}
In particular, these choices mean that the initial estimated attitude $\hat{q}(0) = \hat{X}_0^{-1} \mr{q}$ was misaligned with the true attitude $q$ by $0.99\pi$~rad.
The gain matrices were chosen according to the procedure described in Section \ref{sec:gain_selection}; specifically,
\begin{align*}
    Q_t^{-1} &= (0.5)^2 \bar{Q}_t^{-1}, \\
    B_t &= \actionMat_{\mr{\xi}} \Ad_{\hat{X}}^\vee, \\
    R_t^{-1} &= \tD_z y(\hat{z}, \hat{q}) \bar{R}^{-1} \tD_z y(\hat{z}, \hat{q})^\top, \\
    \tD_z y(\hat{z}, \hat{q}) &= \begin{pmatrix}
        - \hat{q}_v^\top \\ \hat{q}_r I_3 + \hat{q}_v^\times
    \end{pmatrix}.
\end{align*}
The observer equations \eqref{eq:observer_definition} were implemented with adaptive Lie-group Euler integration.
The integration time-step $\td t$ was chosen such that $\vert \Delta^\vee \td t \vert \leq 0.01$ and $\td t \leq 0.1$.

\begin{figure}
    \centering
    \includegraphics[width=0.7\linewidth]{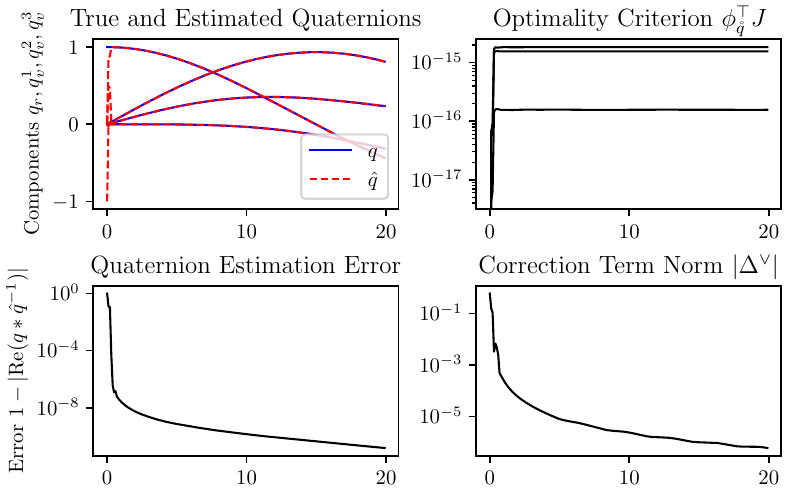}
    \caption{Simulated quaternion attitude estimation without noise.}
    \label{fig:results_noiseless}
\end{figure}

Figure \ref{fig:results_noiseless} shows the results of the estimation without noise.
The initial convergence of the estimate is very fast thanks to the optimality of the proposed observer, in spite of the large initial error of $0.99\pi$~rad.
Following this initial period, the estimation error and correction term both converge exponentially to zero, with oscillations due to the periodic time-variation of the reference vector \eqref{eq:reference_vector_dfn} used in the measurements.
The top-right subplot shows the component values of the optimality criterion \eqref{eq:optimality_condition}.
These all remain near zero, and only fail to be exactly equal to zero due to numerical errors.

\begin{figure}
    \centering
    \includegraphics[width=0.7\linewidth]{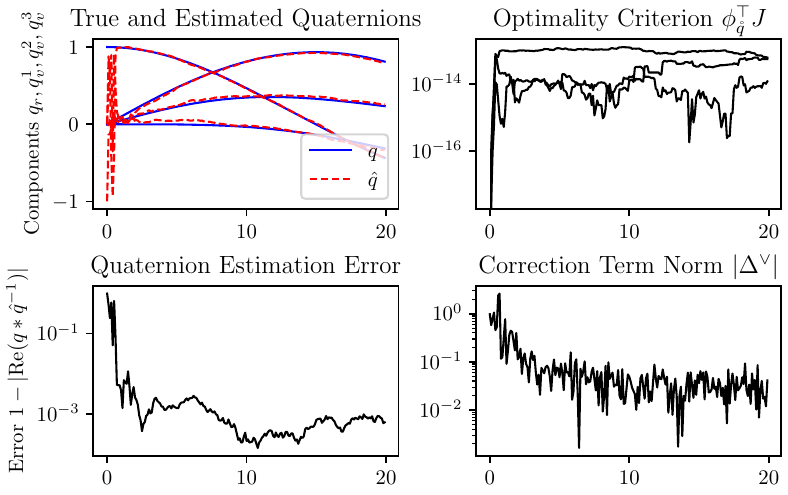}
    \caption{Simulated quaternion attitude estimation with noise.}
    \label{fig:results_noise}
\end{figure}

Figure \ref{fig:results_noise} shows the results of the estimation subject to noise.
As in the noiseless case, the observer is able to estimate the true attitude quickly even with a large initial error.
The optimality criterion is also still satisfied up to numerical errors.
This is as expected, since the observer \eqref{eq:observer_definition} makes no assumption that the input and output signals are free from noise, rather, it only seeks to track a critical point of the value function by keeping the optimality criterion equal to zero.

\section{Conclusion}

This paper proposes a novel observer design for a class of embedded nonlinear systems with Lie-group symmetry.
By leveraging both the embedding structure and the symmetry of the system, the observer is able to provide an estimate of the state that is optimal, globally valid, and always remains on the manifold.
This is a significant contrast to existing approaches, which make trade-offs between local optimality and global validity, or fail to preserve the nonlinear manifold structure intrinsic to the problem.
Simulation results verify the theoretical results and demonstrate performance in a realistic attitude estimation problem.

%%%%%%%%%% - comment removed - %%%%%%%%%%

%%%%%%%%%% - comment removed - %%%%%%%%%%
\bibliographystyle{plain}
\bibliography{2024_minimum_energy.bib}

%%%%%%%%%% - comment removed - %%%%%%%%%%

\appendix
\section{Proofs}

\begin{proofx}{Lemma \ref{lem:value_function_derivative}}
    Applying the HJB equation \cite{2005_bertsekas_DynamicProgrammingOptimal} to \eqref{eq:value_function_dfn} yields
    \begin{align*}
        &\tD_t V_t(e)
        = \min_{\mu \in \R^l} \left\{-\tD_e V_t(e)[\dot{e}] + \ell_t(\mu, \nu) \right\} \\
        &= \min_{\mu \in \R^l} \left\{-\tD_e V_t(e)[\Delta e + B_t \mu] + \ell_t(\mu, \nu) \right\} \\
        &= \min_{\mu \in \R^l} \left\{-\tD_e V_t(e)[\Delta e + B_t \mu] + \frac{1}{2} \vert \mu \vert^2_{Q_t}
        + \frac{1}{2} \vert \nu \vert^2_{R_t} \right\}.
    \end{align*}
    The argument of this minimisation problem is positively quadratic in terms of $\mu$, so the unique minimiser $\mu^*$ is found by setting the derivative with respect to $\mu$ to zero.
    If $\mu^*$ is the minimiser, then for any direction $\delta \in \R^l$,
    \begin{align*}
        0&= -\tD_e V_t(e)[B_t \delta] + \langle Q_t \mu^*, \delta \rangle \\
        &= -\langle B_t^\top \nabla_e V_t(e), \delta \rangle + \langle Q_t \mu^*, \delta \rangle \\
        &= \langle Q_t \mu^* - B_t^\top \nabla_e V_t(e), \delta \rangle,
    \end{align*}
    and hence
    \begin{gather*}
        \mu^* =  Q_t^{-1} B_t^\top \nabla_e V_t(e).
    \end{gather*}
    Substituting this into the HJB equation yields
    \begin{align*}
        \tD_t V_t(e)
        &= -\tD_e V_t(e)[\Delta e + B_t Q_t^{-1} B_t^\top  \nabla_e V_t(e)]
        \\&\hspace{0.5cm}
        + \frac{1}{2} \vert Q_t^{-1} B_t^\top \nabla_e V_t(e) \vert^2_{Q_t}
        + \frac{1}{2} \vert \nu \vert^2_{R_t},
        \\
        &=  - \langle \nabla_e V_t(e), \Delta e \rangle
        - \frac{1}{2} \vert B_t^\top \nabla_e V_t(e) \vert_{Q_t^{-1}}^2
        \notag \\ &\hspace{1cm}
        + \frac{1}{2} \vert y - C_t \hat{X}^{-1} e \vert^2_{R_t},
    \end{align*}
    as required.
\end{proofx}

\begin{proofx}{Corollary \ref{cor:value_function_gradient_derivative}}
    The result is a consequence of Lemma \ref{lem:value_function_derivative} obtained by differentiating with respect to $e \in \R^{m}$.
    For any direction $\delta \in \R^m$, one has
    \begin{align*}
        &\tD_t (\tD_e V_t(e)[\delta])
        = \tD_e (\tD_t V_t(e)) [\delta ], \\
        %%%%%%%%%% - comment removed - %%%%%%%%%%
        &= \tD_e \left(
            - \langle \nabla_e V_t(e), \Delta e \rangle
            - \frac{1}{2} \vert B_t^\top \nabla_e V_t(e) \vert_{Q_t^{-1}}^2
            \right.\notag \\ &\hspace{2cm}\left.
            + \frac{1}{2} \vert y - C_t \hat{X}^{-1} e \vert^2_{R_t}
         \right)[\delta], \\
        %%%%%%%%%% - comment removed - %%%%%%%%%%
        &= - \langle \nabla_e V_t(e), \Delta \delta \rangle
            - \langle \Hess_e V_t(e) \delta, \Delta e \rangle
            \notag \\ &\hspace{0.5cm}
            - \langle B_t^\top \nabla_e V_t(e), Q_t^{-1} B_t^\top \Hess_e V_t(e)\delta \rangle
            \notag \\ &\hspace{0.5cm}
            + \langle R_t (C_t \hat{X}^{-1} e - y), C_t \hat{X}^{-1} \delta \rangle, \\
        %%%%%%%%%% - comment removed - %%%%%%%%%%
        &= - \langle \Delta^\top \nabla_e V_t(e), \delta \rangle
        - \langle \Hess_e V_t(e) \Delta e, \delta \rangle
        \notag \\ &\hspace{0.5cm}
        - \langle \Hess_e V_t(e) B_t Q_t^{-1} B_t^\top \nabla_e V_t(e), \delta \rangle
        \notag \\ &\hspace{0.5cm}
        + \langle \hat{X}^{-\top} C_t^\top R_t (C_t \hat{X}^{-1} e - y), \delta \rangle.
    \end{align*}
    The result follows from the definition of the gradient as the transpose of the differential.
\end{proofx}

\begin{proofx}{Corollary \ref{cor:hess}}
This result follows from differentiating the gradient presented in Corollary \ref{cor:value_function_gradient_derivative}.
The time-derivative of the Hessian is given by
\begin{align*}
    &\tD_t \Hess_e V_t(e) = \tD_e \tD_t \nabla_e V_t(e) \\
    &= - (\tD_e \Hess_e V_t(e))[\cdot, \Delta e]
    - \Hess_e V_t(e) \Delta
    - \Delta^\top \Hess_e V_t(e)
    \notag \\ &\hspace{1cm}
    - (\tD_e \Hess_e V_t(e))[\cdot, B_t Q_t^{-1}B_t^\top \nabla_e V_t(e)]
    \notag \\ &\hspace{1cm}
    - \Hess_e V_t(e) B_t Q_t^{-1}B_t^\top \Hess_e V_t(e)
    \notag \\ &\hspace{1cm}
    + \hat{X}^{-\top} C_t^\top R_t C_t \hat{X}^{-1}.
\end{align*}
Substitute $H$ for $\Hess_e V_t(e)$, and note that $\tD_e H = 0$.
Then we recover exactly the proposed dynamics \eqref{eq:hessian_derivative} for $H$.
It follows from the Cauchy-Kowalevski Theorem \cite{1995_folland_IntroductionPartialDifferential}, $H = \Hess_e V_t(e)$ for all time.
\end{proofx}

\end{document}